\theoremstyle{plain}
\newtheorem{thm}{\protect\theoremname}
  \theoremstyle{definition}
  \newtheorem{problem}[thm]{\protect\problemname}
  \theoremstyle{definition}
  \newtheorem{defn}[thm]{\protect\definitionname}
  \theoremstyle{remark}
  \newtheorem{claim}[thm]{\protect\claimname}
  \providecommand{\claimname}{Claim}
  \providecommand{\definitionname}{Definition}
  \providecommand{\problemname}{Problem}
\providecommand{\theoremname}{Theorem}
\begin{document}

\title{Reducing Linear Programs into Min-max Problems}

\author{Carmi Grushko}
\maketitle
\begin{abstract}
We show how to reduce a general, strictly-feasible LP problem, into
a min-max problem, which can be solved by the algorithm from the third
section of \cite{Grushko2012}.
\end{abstract}

\section{Reduction}
\begin{problem}
\label{LP-problem}Let us consider a linear program in the following
form,
\end{problem}
\begin{eqnarray*}
\mbox{maximize}_{x\in\mathbb{R}^{d}} &  & \left(0,0,\dots,1\right)^{T}x\\
\mbox{subject to} &  & Ax\leq b
\end{eqnarray*}

and let us assume that the problem is strictly feasible; that is,
there exists a point $x$ for which $Ax<b$. Further assume that the
origin $\left(0,0,\dots,0\right)$ is a strictly feasible point. 

Any strictly feasible linear program can be rotated such that the
objective function is $\left(0,0,\dots,1\right)\cdot x$, and translated
such that the origin is a strictly feasible point. The translation
is discussed in \emph{Strict Feasibility of the Origin}, below, while
the rotation is explained further below, in \emph{Rotation}.

We now show how to solve Problem \ref{LP-problem} using the algorithm
described in the third section of \cite{Grushko2012}.
\begin{defn}[$z$-axis]
\label{z-axis} Let $z$ denote the last coordinate of the space
of our problem. $z$-intersect of a hyperplane refers to its intersection
with the $z$-axis, while the last coordinate of a point is its $z$
value. For example, in a 5-dimensional space, the $z$-coordinate
denotes the fifth coordinate.
\end{defn}

\begin{defn}[Planes]
\label{def-planes}For terseness, we denote the plane $\pi\cdot p=\sigma$
as $\left(\pi,\sigma\right)$.
\end{defn}

\begin{defn}[Projective Duality]
\label{def-projective-duality}Let $p\in\mathbb{R}^{d}$ be a point.
Then its dual, $p^{*}$, is the plane $\left(p,-1\right)$. Conversely,
let $\left(\pi,\sigma\right)$ be a plane with $\sigma\neq0$. Then
its dual, $\left(\pi,\sigma\right)^{*}$, is the point $-\frac{\pi}{\ensuremath{\sigma}}$.
\end{defn}
It is straightforward to confirm that the projective duality is self-dual
and incidence preserving. For future use, we note that the $z-$intersect
of a dual $p^{*}$ to a point $p$ is $-\frac{1}{\ensuremath{p_{d}}}$.
\begin{defn}[Constraints and their Duals]
\label{def-constraints-and-duals}The set of constrains in Problem
\ref{LP-problem}, $Ax\leq b$, can be described by a set of planes.
Let us denote these planes as the set $\Pi={(A_{i},b_{i})}$, and
their duals as $\Pi^{*}=-{A_{i}/b_{i}}$.
\end{defn}
Note that we exclude the definition of duality for planes which intersect
the origin; however, since the origin is strictly feasible in Problem
\ref{LP-problem}, no constraint plane intersects it.
\begin{claim}
\label{claim-p-and-origin-same-side}Let $p$ be a point and $\left(\pi,\sigma\right)$
a plane. Then $p$ and the origin are on the same side of $\left(\pi,\sigma\right)$,
if and only if the point $\left(\pi,\sigma\right)^{*}$ and the origin
are on the same side of the plane $p^{*}$.
\begin{proof}
$p$ and the origin are on the same side of $\left(\pi,\sigma\right)$
iff,
\begin{eqnarray*}
\mbox{sign}\left(\left(\pi\cdot p-\sigma\right)\cdot\left(\pi\cdot0-\sigma\right)\right) & = & \mbox{sign}\left(\sigma^{2}\cdot\left(-\frac{1}{\sigma}\cdot\pi\cdot p+1\right)\right)\\
 & = & \mbox{sign}\left(-\frac{1}{\sigma}\cdot\pi\cdot p+1\right)\\
 & = & 1
\end{eqnarray*}
Similarly, $\left(\pi,\sigma\right)^{*}$ (which equals $-\pi/\sigma$)
and the origin are on the same side of the plane $p^{*}$ (which equals
$\left(p,-1\right)$) iff,
\[
\mbox{sign}\left(\left(p\cdot\left(-\frac{\pi}{\sigma}\right)+1\right)\cdot\left(p\cdot0+1\right)\right)=\mbox{sign}\left(-\frac{\pi}{\sigma}\cdot p+1\right)=1.
\]

\end{proof}
\end{claim}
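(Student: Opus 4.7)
The plan is to reduce both sides of the stated biconditional to the same scalar inequality by a direct algebraic computation. The underlying fact I would invoke is the standard characterization: two points $q_1$ and $q_2$ lie on the same side of a plane $(\pi,\sigma)$ precisely when the product $(\pi\cdot q_1-\sigma)(\pi\cdot q_2-\sigma)$ is positive, i.e.\ both evaluations have the same sign. Under Definition \ref{def-projective-duality} the dual point $(\pi,\sigma)^*$ is only defined when $\sigma\neq 0$, and this is exactly the non-degeneracy we need so that neither the origin nor $(\pi,\sigma)^{*}$ lies on the relevant plane.

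First I would apply this characterization to $p$ and the origin against $(\pi,\sigma)$. Since the origin contributes the factor $-\sigma$, the whole product simplifies to $-\sigma(\pi\cdot p-\sigma)$, and pulling out the positive factor $\sigma^{2}$ shows that being on the same side is equivalent to $1-\tfrac{1}{\sigma}\pi\cdot p>0$, precisely the simplification performed in the first computation of the claim.

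Next I would apply the same characterization to $(\pi,\sigma)^{*}=-\pi/\sigma$ and the origin against $p^{*}=(p,-1)$. Here the origin contributes the factor $(p\cdot 0-(-1))=1$, so the whole product reduces at once to $1-\tfrac{1}{\sigma}\pi\cdot p$, the same expression as before. Comparing the two reductions yields the iff.

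There is no real obstacle; the only point worth flagging is the implicit hypothesis $\sigma\neq 0$ (needed to define $(\pi,\sigma)^{*}$), which is guaranteed in our application by the remark after Definition \ref{def-constraints-and-duals} that no constraint plane passes through the origin. I would state this explicitly at the start of the proof so that the divisions by $\sigma$ are legitimate.
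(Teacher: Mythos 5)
Your proposal is correct and follows essentially the same route as the paper: both reduce the two same-side conditions to the sign of $1-\tfrac{1}{\sigma}\,\pi\cdot p$ by evaluating the plane equations at the two points and factoring out the positive quantity $\sigma^{2}$. Your explicit remark that $\sigma\neq 0$ is needed (and is guaranteed since no constraint plane passes through the strictly feasible origin) is a small but worthwhile addition the paper leaves implicit.
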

This leads us to a characterization of the dual to a feasible point.
\begin{claim}
\label{claim-feasibility-of-dual}Assume the origin is a feasible
point. Then, a point $p$ is feasible iff the set of points $\Pi^{*}$
representing the problem constraints, and the origin, are on the same
side of the point's dual plane, $p^{*}$.\end{claim}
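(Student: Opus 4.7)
The plan is to reduce the claim to a plane-by-plane application of Claim \ref{claim-p-and-origin-same-side}. First I would unpack the meaning of feasibility in primal terms: a point $p$ satisfies $A_i\cdot p\le b_i$ for every row $i$, and since the origin is strictly feasible we have $A_i\cdot 0=0<b_i$, so each constraint value $b_i$ is strictly positive. Therefore $p$ is feasible if and only if, for each $i$, the quantities $A_i\cdot p-b_i$ and $A_i\cdot 0-b_i$ have the same sign, which is precisely the statement that $p$ and the origin lie on the same side of the plane $(A_i,b_i)\in\Pi$.

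Next I would invoke Claim \ref{claim-p-and-origin-same-side} with $\pi=A_i$ and $\sigma=b_i$. That claim is an ``iff'' that exchanges the roles of point and plane under the projective duality of Definition \ref{def-projective-duality}: $p$ and the origin lie on the same side of $(A_i,b_i)$ if and only if the dual point $(A_i,b_i)^{*}=-A_i/b_i\in\Pi^{*}$ and the origin lie on the same side of the dual plane $p^{*}=(p,-1)$. Quantifying this equivalence over all $i$ gives exactly the statement of Claim \ref{claim-feasibility-of-dual}, in both directions.

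A small item to address explicitly is well-definedness of the duals $\Pi^{*}$: since the origin is strictly feasible we have $b_i>0$ for every $i$, so no constraint plane passes through the origin and the division by $b_i$ in Definition \ref{def-constraints-and-duals} is legal. I do not anticipate a real obstacle here; the only point that requires a little care is keeping the ``same side'' sign convention uniform when translating from the inequality $Ax\le b$ into a geometric half-space statement, which is handled cleanly by the strict positivity of the $b_i$.
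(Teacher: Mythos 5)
Your proposal is correct and follows essentially the same route as the paper: feasibility of $p$ is restated as $p$ sharing the origin's side of every constraint plane in $\Pi$, and then Claim \ref{claim-p-and-origin-same-side} is applied plane by plane to transfer this to the dual picture. Your added remarks on $b_i>0$ and well-definedness of $\Pi^{*}$ only make explicit what the paper notes separately before the claim.
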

\begin{proof}
Since the origin is feasible, any other feasible point must share
with it the same side of all the constraint planes $\Pi$. By Claim
\ref{claim-p-and-origin-same-side}, this implies all duals to these
planes, $\Pi^{*}$, and the origin, must be on the same side of $p^{*}$.
\end{proof}
Claim \ref{claim-feasibility-of-dual} is illustrated in Figure \ref{fig:geometric-intuition}.
\begin{figure}
\begin{centering}
\includegraphics[width=0.45\columnwidth]{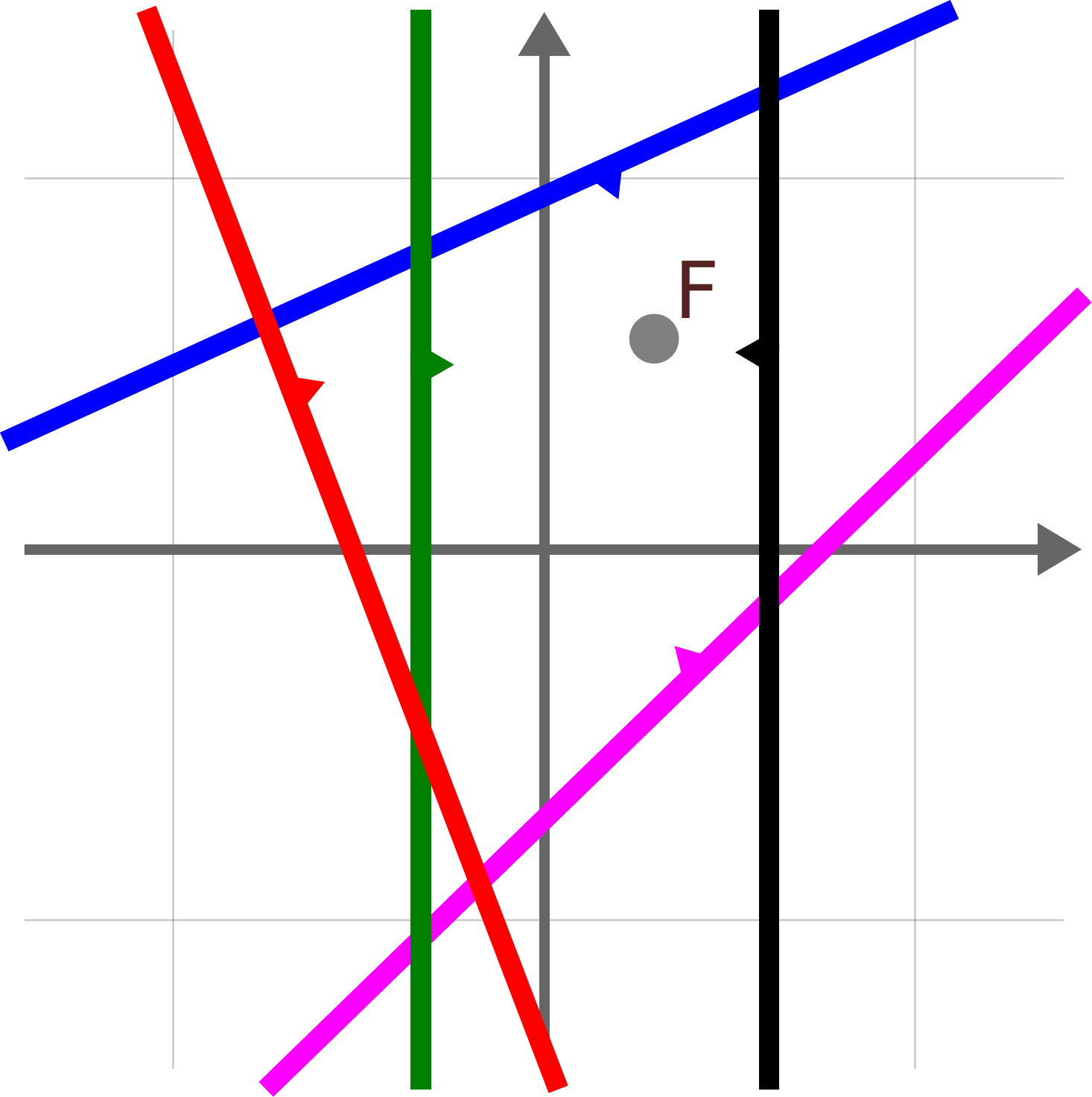}\hfill{}\includegraphics[width=0.45\columnwidth]{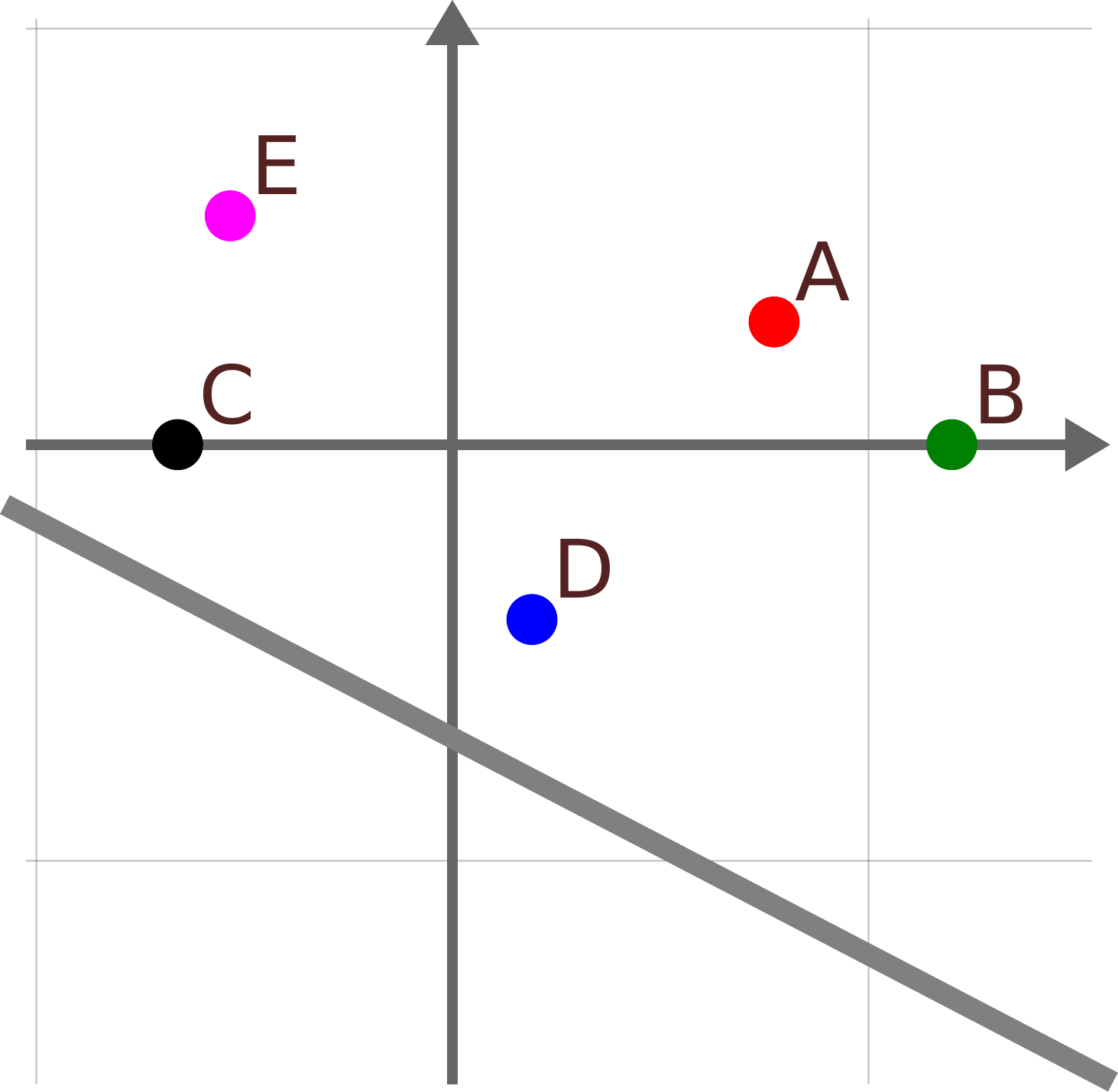}
\par\end{centering}

\caption{\emph{\label{fig:geometric-intuition}Left:} A point and a set of
lines. Little arrows denote feasible half-space.\emph{ Right:} The
duals to these lines and point. Elements and their duals are related
by color. }
\end{figure}
 The point $F$ is a feasible point and is on the same side as the
origin relative to all of the constraint planes (left figure). Its
dual, $F^{*}$ has all the constraint points $\Pi^{*}$ and the origin
on its same side (right figure).
\begin{defn}[Feasible Dual Plane]
A plane $\left(\pi,\sigma\right)$ is feasible if its dual point,
$\pi/\sigma$ is a feasible solution to Problem \ref{LP-problem}.
Applying Claim \ref{claim-feasibility-of-dual}, this implies that
all dual constraint point $\Pi^{*}$, and the origin, are on the same
side of $\left(\pi,\sigma\right)$.
\end{defn}
Since the origin is a strictly feasible point, an optimal solution
$p$ to Problem \ref{LP-problem} must have a positive $z$ value.
As a result, its dual must have a negative $z-$intersect. Moreover,
since $p$ has a largest $z$ value amongst all feasible points, its
dual must have the largest (negative) $z-$intersect amongst all feasible
dual planes. In the case that the dual plane can be made to have an
arbitrarily small negative $z-$intersect, the problem is unbounded.

It follows, then, that a plane which supports the set of points $\Pi^{*}$
from below and has a maximal (negative) $z-$intersect, is a solution
to Problem \ref{LP-problem}, and this is exactly the problem which
the algorithm from the third section of \cite{Grushko2012} solves.

\subsection{Strict Feasiblity of the Origin}

Given a strictly feasible solution $p_{0}$ to Problem \ref{LP-problem},
set $v\triangleq Ap_{0}$, and replace $b$ by $b'=b-v$. Because
$Ap\leq b$ if and only if $A\left(p-p_{0}\right)\leq b'$, the feasible
set of the new problem equals the feasible set of the original problem,
translated by $p_{0}$. In addition, because $v<b$, it holds that
$b'>0$, which means that $A\cdot0<b$. That is, the origin is a strictly
feasible point.

Finding a strictly feasible solution to Problem \ref{LP-problem}
can be performed by solving the following LP problem,
\begin{eqnarray*}
\mbox{minimize}_{s\in\mathbb{R},p\in\mathbb{R}^{d}} &  & s\\
\mbox{s.t.} &  & A\cdot p-b\leq s
\end{eqnarray*}
for which $p=0$ and $s=-\min\left(b\right)+1$ are a feasible solution.
If the optimal solution $s^{*}$ is negative, $p^{*}$ is a strictly
feasible point. 

Alternatively, the equivalent min-max problem can be solved in the
way described in the third section of \cite{Grushko2012},
\[
\min_{p\in\mathbb{R}^{d}}\max\left(Ap-b\right);
\]
if the solution is negative, $p^{*}$ is a strictly feasible point.

\subsection{Rotation}

Let $c\in\mathbb{R}^{d}$ be a general vector, and $u=c-\left(0,0,\dots,1\right)^{T}$.
Define the following matrix:
\[
R'=I-\frac{uu^{T}}{\left\Vert u\right\Vert ^{2}},
\]
and set $R$ to be $R'$ with its first row negated. It is straightforward
to verify that $R$ is a rotation matrix, and that $Rc=\left(0,0,\dots,1\right)^{T}$.

Applying $R$ to a general LP program,

\begin{eqnarray*}
\mbox{maximize}_{x\in\mathbb{R}^{d}} &  & \left(Rc\right)^{T}x\\
\mbox{subject to} &  & \left(AR^{T}\right)x\leq b,
\end{eqnarray*}
results in the form of Problem \ref{LP-problem}. The solutions of
the two problems are related by rotation with $R$.

The computational cost of this rotation is bounded by $\mathcal{O}\left(dn\right)$.

\bibliographystyle{plain}
\bibliography{lp-reduction-new-formulation}

\end{document}